\theoremstyle{plain}
\newtheorem{theorem}{Theorem}[section]
\theoremstyle{plain}
\newtheorem{lemma}{Lemma}[section]
\theoremstyle{plain}
\newtheorem{proposition}{Proposition}[section]
\theoremstyle{plain}
\newtheorem{algorithm}{Algorithm}[section]
\theoremstyle{definition}
\theoremstyle{remark}
\newtheorem{remark}{Remark}[section]
\newcommand{\E}{\mathrm{E}}
\newcommand{\Var}{\mathrm{Var}}
\newcommand{\Cov}{\mathrm{Cov}}
\newcommand\independent{\protect\mathpalette{\protect\independenT}{\perp}}
\def\independenT#1#2{\mathrel{\rlap{$#1#2$}\mkern2mu{#1#2}}}
\title{Locally weighted Markov chain Monte Carlo}
\author{
\textsc{Espen Bernton} \\
\textit{ Department of Statistics, Harvard University }\\
\texttt{ebernton@g.harvard.edu} \and  \textsc{Shihao Yang} \\
\textit{ Department of Statistics, Harvard University }\\
\texttt{ shihaoyang@g.harvard.edu } \and \textsc{Yang Chen} \\
\textit{ Department of Statistics, Harvard University }\\
\texttt{ yangchen@fas.harvard.edu} \and \textsc{Neil Shephard} \\
\textit{Department of Economics and Department of Statistics, Harvard
University}\\
\texttt{shephard@fas.harvard.edu}  \and \textsc{Jun S. Liu\footnote{Corresponding author.}} \\
\textit{ Department of Statistics, Harvard University }\\
\texttt{ jliu@stat.harvard.edu } 
}
\date{\today}
\begin{document}

\maketitle

\vspace{-12pt}
\begin{abstract}
\noindent
We propose a weighting scheme for the proposals within Markov chain Monte Carlo algorithms and show how this can improve statistical efficiency at no extra computational cost. These methods are most powerful when combined with multi-proposal MCMC algorithms such as multiple-try Metropolis, which can efficiently exploit modern computer architectures with large numbers of cores.  The locally weighted Markov chain Monte Carlo method also improves upon a partial parallelization of the Metropolis-Hastings algorithm via Rao-Blackwellization. We derive the effective sample size of the output of our algorithm and show how to estimate this in practice.  Illustrations and examples of the method are given and the algorithm is compared in theory and applications with existing methods.

\vspace{6pt}
\noindent
{\bf Keywords}:  Weighted samples; Markov chain Monte Carlo; Rao-Blackwellization; Parallel computation; Simulation.
\end{abstract}

\newpage

\section{Introduction}\label{section:introduction}

Monte Carlo methods have become invaluable tools for solving demanding computational problems in a wide variety of scientific disciplines. In this paper, we propose weighting schemes for Markov chain Monte Carlo (MCMC) methods, where the main computational step often can be implemented using modern computer architectures with large numbers of cores. Since the weighting occurs within each iteration, we call this method locally weighted Markov chain Monte Carlo (LWMCMC). 

\vspace{12pt}
We will show that by allowing the points proposed in an MCMC algorithm, even those rejected, to take on weights, we can often improve statistical efficiency. The usual MCMC algorithms arise as special cases under specific weighting and proposal schemes in the framework we define. A measure of effective sample size ($ESS$) for this new class of algorithms is derived and shown to have natural connections to the existing measure of $ESS$ for MCMC \citep[p.~126]{kass1998,liu2001}. LWMCMC improves the parallel Metropolis-Hastings method of \citet{calderhead2014}. We show that our method can be interpreted as a Rao-Blackwellization of an extended version of his result.

\vspace{12pt}
To illustrate the idea, we first describe our weighting scheme for the Metropolis-Hastings (MH) algorithm \citep{metropolis1953, hastings1970}. To facilitate later discussion, our exposition of the algorithm differs sligthly from the standard description. Let the target density $\pi$ be defined on a sample space $S$.  The proposal kernel $K(dx_1;x)$ is a measure on $S$, with corresponding density $k(x_1;x)$. Set $x^{(1)}_0$ as the initial value and let $j=1$. To estimate $\mu_h = \int_S h(x)\pi(x)dx$, the MH algorithm iterates:
\textit{
\begin{enumerate}
\item Draw a proposal $x^{(j)}_1$ from $K(dx_1; x^{(j)}_0)$.
\item Calculate
      $$r(x^{(j)}_1;x^{(j)}_0) = \min\left\{1,\frac{\pi(x^{(j)}_1)k(x^{(j)}_0;x^{(j)}_1)}{\pi(x^{(j)}_0)k(x^{(j)}_1;x^{(j)}_0)} \right\}.$$
\item Set $y=x^{(j)}_1$ with probability $r(x^{(j)}_1;x^{(j)}_0)$ and $y=x^{(j)}_0$ with probability $1-r(x^{(j)}_1;x^{(j)}_0)$. \label{step:mh_draw}
\item Set $x^{(j+1)}_0=y$, set $j=j+1$ and go to step 1 until $j=n$.
\item Estimate $\mu_h$ with $\frac{1}{n}\sum_{j=1}^n h(x_0^{(j)}).$ \label{step:estimate}
\end{enumerate}}

In this paper, we propose giving both $x^{(j)}_0$ and $x^{(j)}_1$ weights $w(x^{(j)}_0)$ and $w(x^{(j)}_1)$ for each $j$ and substitute step \ref{step:estimate} with the new LWMCMC estimator
$$\hat{\mu}_h = \frac{1}{n}\sum_{j=1}^n \sum_{i=0}^1 w(x^{(j)}_i)h(x^{(j)}_i).$$
For instance, taking  $w(x^{(j)}_0) = 1-r(x^{(j)}_1;x^{(j)}_0)$ and $w(x^{(j)}_1) = r(x^{(j)}_1;x^{(j)}_0)$ results in an unbiased $\hat{\mu}_h$ which often has lower variance than the standard MH estimator. We will primarily be looking at two weighting schemes that give unbiased estimators, of which the aforementioned is the first version. Version 2 uses the weights
$$w(x^{(j)}_0) = \frac{\pi(x^{(j)}_0)k(x^{(j)}_1;x^{(j)}_0)}{\sum_{i=0}^1 \pi(x^{(j)}_i)k(x^{(j)}_{1-i};x^{(j)}_i)}, \quad w(x^{(j)}_1) = \frac{\pi(x^{(j)}_1)k(x^{(j)}_0;x^{(j)}_1)}{\sum_{i=0}^1 \pi(x^{(j)}_i)k(x^{(j)}_{1-i};x^{(j)}_i)}.$$

One way to systematically construct weighting schemes that result in unbiased estimators is by noting that step \ref{step:mh_draw} is a move from $x^{(j)}_0$ in a finite state Markov chain on $\{x^{(j)}_0, x^{(j)}_1\}$ defined by the transition matrix
\[ P =  \left( \begin{array}{cc}
1 - r(x^{(j)}_1;x^{(j)}_0) & r(x^{(j)}_1;x^{(j)}_0)  \\
r(x^{(j)}_0;x^{(j)}_1) & 1-r(x^{(j)}_0;x^{(j)}_1)  \\ \end{array} \right)\]

and that substituting step \ref{step:mh_draw}  with setting $y=x_1^{(j)}$ with probability $P^{\nu}_{1,2}$ and $y=x_0^{(j)}$ with probaility $P^{\nu}_{1,1}$ for any $\nu \geq 1$  leaves $\pi$ invariant. Here, $P^{\nu}_{i,j}$ is the $(i,j)$ entry of the $\nu\mbox{th}$ power of $P$. Hence, taking the weights $w(x^{(j)}_0) = P^{\nu}_{1,1}$ and $w(x^{(j)}_1) = P^{\nu}_{1,2}$ in $\hat{\mu}_h$ results in an unbiased estimator. In particular, version 2 of the weights arises from taking  ${\nu}\rightarrow \infty$ which corresponds to the stationary distribution of the Markov chain defined by $P$. It is easy to show that the version 2 weights satisfy $\bm w^{(j)} = \bm w^{(j)}P$, where $\bm w^{(j)} = \{w(x_0^{(j)}),w(x_1^{(j)})\}$. They also appear in the acceptance-rejection rule of \citet{barker1965}. 

\vspace{12pt}
Moreover, the usual MH algorithm itself can be viewed as producing locally weighted samples, in which the accepted point  gets weight 1 and the rejected gets weight  0 in step \ref{step:estimate}. The gain in choosing other weighting schemes stems partly from the reduction in variation of the weights. The weighting scheme used to produce $\hat{\mu}_h$ can be chosen independently of the probability vector used to propagate the chain. Section \ref{section:lwmcmc} proves the unbiasedness of $\hat{\mu}_h$ constructed the way outlined above for a generalized version of the MH algorithm.

\vspace{12pt}
In section \ref{section:ess} we show how to compute the effective sample size ($ESS$) for a given chain and weighting scheme. Applying this measure to $n=10,000$ samples obtained using the MH algorithm targeting a two-dimensional standard normal distribution using a normal proposal kernel with covariance $1.2^2 I_2$, in combination with the usual MH, $\nu = 1$ and $\nu \rightarrow \infty$ weights, gives $ESS = 1,189$ and $1,359$ and $1,337$ respectively. Since this is simply using the exact same samples weighted in three different ways, it shows we can trivially improve the estimation procedure by using locally weighted samples. The proposal covariance was tuned to give an acceptance rate of roughly 50\% as recommended by \citet{acceptance1997}.
 
\vspace{12pt}
The rest of this paper has four sections.  In section \ref{section:lwmcmc} we detail the main idea in a wider context. Section \ref{section:ess} gives a way of computing the algorithm's $ESS$. Section \ref{section:examples} provides illustrations of the method. Section \ref{section:conclusion} concludes, while the appendix contains the relevant proofs.

\section{Locally weighted Markov chain Monte Carlo}\label{section:lwmcmc}
\subsection{Main Idea}
We begin by giving a more general algorithm than the above example. It naturally extends multi-proposal MCMC algorithms such as the multiple-try Metropolis (MTM) of \citet{mtm2000}, and can beat such methods by not discarding potentially useful information available in the MCMC. By making multiple proposals within each iteration, MTM allows the use of transition kernels corresponding to large searching regions, hence mediating the ``conflict of interest" between the desired step size and desired acceptance rate that arises in MH. According to a sampling rule on the set of proposals \citep{mtm2000, rgm2001} MTM chooses one candidate point for an acceptance-rejection step. Typically, this is a ``good" point which is often accepted. In addition to performing such an acceptance-rejection step, we also advocate using the available weights to give estimators with smaller variance than the standard MCMC estimator.

\vspace{12pt}
As before, we wish to sample from the target distribution $\pi$, known up to a normalizing constant, defined on a sample space $S$. Let $K(dx_1,\dots,dx_M;x)$ denote a one-to-M kernel, defined as a probability measure on $S^{\otimes M}$ with density function $k(x_1,\dots,x_M;x)$. $K$ summarizes the proposal generation process, and allows for dependency and deterministic relations among the proposed points. Define a $(M+1)$-to-one kernel $T(dy;x_0,x_1,\dots,x_M)$, which is a probability measure on $S$. The restriction on $T$ is that it has to leave $\pi$ invariant, but can be taken to be any acceptance-rejection step or sampling rule that satisfies this. Examples are given in sections \ref{section:choice} and \ref{section:examples}.

\vspace{12pt}
\begin{algorithm}[Locally weighted MCMC]  \label{algo:mcis} Set $x_0^{(1)}$ to be the initial value and set $j=1$. Collect points $\{x_i^{(j)} : i=0,\dots, M; j=1,\dots n\}$ and weights $\{w(x_i^{(j)}) : i=0,\dots, M; j=1,\dots n\}$ to estimate $\mu_h$ according to the steps:
\begin{enumerate}
\item Draw proposals $\{x^{(j)}_1,\dots,x^{(j)}_M\}$ from $K(dx_1,\dots,dx_M;x^{(j)}_0)$. \label{step:mcis_draw}
\item Calculate and store the weights $w(x^{(j)}_i)$ according to a weighting scheme chosen by the user, e.g. (but not restricted to): \label{step:mcis_resample}
	\begin{itemize}
	\item Version 1: $$w(x^{(j)}_i) = \frac{1}{M}\min \left\{1, \frac{\pi(x^{(j)}_i)k(\bm x^{(j)}_{-i};x^{(j)}_i)}{\pi(x^{(j)}_0)k(\bm x^{(j)}_{-0};x^{(j)}_0)} \right\},\quad \mbox{for $i\geq 1$, } \quad w(x^{(j)}_0) = 1- \sum_{i=1}^M w(x^{(j)}_i).$$
	\item Version 2: $$w(x^{(j)}_i) = \frac{\pi(x^{(j)}_i)k(\bm x^{(j)}_{-i};x^{(j)}_i)}{\sum_{i=0}^M \pi(x^{(j)}_i)k(\bm x^{(j)}_{-i};x^{(j)}_i)}, \quad \mbox{for} \quad i = 0,\dots,M, \quad\quad\quad\quad\quad\quad\quad\quad\quad\quad\quad\quad\quad\quad\quad$$
where $\bm x^{(j)}_{-i} = (x^{(j)}_{0},\dots,x^{(j)}_{i-1},x^{(j)}_{i+1},\dots,x^{(j)}_{M})$.
	\end{itemize}
\item Draw $y$ from $T(dy;x^{(j)}_0,x^{(j)}_1,\dots,x^{(j)}_M)$. \label{step:mcis_propogation}
\item Set $x^{(j+1)}_0=y$ and $j=j+1$, and go to step \ref{step:mcis_draw} until $j=n$.
\item Estimate $\mu_h$ with $\hat{\mu}_h = \frac{1}{n}\sum_{j=1}^n \sum_{i=0}^M h(x_i^{(j)})w(x_i^{(j)}).$
\end{enumerate}
\end{algorithm}

The weights are normalised within each iteration, such that $\sum_{i=0}^M w(x_i^{(j)}) = 1$ for all $j$. Version 1 corresponds to the multi-proposal version of the $\nu = 1$ weights from section \ref{section:introduction}. Likewise, version 2 uses the analogous $\nu \rightarrow \infty$ weights. Other weighting schemes, e.g. for $\nu \geq 2$, are potentially also useful, but typically require more computation.

\subsection{Choice of the propagation kernel $T$} \label{section:choice}
$T$ can be taken as any acceptance-rejection step or resampling rule that leaves $\pi$ invariant. For instance,  $T$ can be taken independent of the set of proposals $\{x_1,\dots,x_M\}$ generated by $K$. The class of such $T$'s essentially contains all MCMC methods. Examples include sampling $y$ according to a standard MH or Gibbs step from $x$, for which $T(dy;x_0,x_1,\dots,x_M) = T(dy;x_0)$.

\vspace{12pt}
Perhaps more useful is to allow $T$ to use information about the set $\{x_1,\dots,x_M\}$. E.g. letting $T(dy;x_0,x_1,\dots,x_M) = \sum_{i=0}^M w(x_i) \delta_{x_i}(dy)$, where $\delta_x$ denotes the Dirac delta function, we encompass Calderhead's algorithm (see the appendix). $T$ allows us to store weights according to one weighting scheme, but propagate using another. Moreover, the algorithm also has flexibility to use MTM rules on $\{x_1,\dots,x_M\}$ to choose a ``good" point to move to. An explicit example is given in \ref{section:hmc}.

\subsection{Properties} \label{section:properties}
\begin{theorem} \label{theorem:proper}
The estimators produced by versions 1 and 2 of Algorithm \ref{algo:mcis} are unbiased for $\mu_h$ for any measurable $h$.
\end{theorem}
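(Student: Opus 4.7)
The strategy is to reduce unbiasedness of the time-average to unbiasedness of a single-iteration weighted sum under the stationary distribution. Since the propagation kernel $T$ leaves $\pi$ invariant by assumption, if $x_0^{(1)} \sim \pi$ then $x_0^{(j)} \sim \pi$ for every $j$, and the joint law of $(x_0^{(j)}, x_1^{(j)}, \ldots, x_M^{(j)})$ is the measure
$$Q(dx_0, dx_1, \ldots, dx_M) \;=\; \pi(x_0)\,k(x_1,\ldots,x_M;x_0)\,dx_0\,dx_1\cdots dx_M.$$
Hence it suffices to show that under $Q$, $E_Q\bigl[\sum_{i=0}^M h(x_i) w(x_i)\bigr] = \mu_h$ for each of the two weighting schemes; the $1/n$ average then inherits unbiasedness.

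\textbf{Version 2 (stationary/Barker weights).} The key observation is a hidden symmetry: the product
$$\pi(x_0)\,k(\mathbf{x}_{-0};x_0) \cdot w(x_i) \;=\; \frac{\pi(x_0)\,k(\mathbf{x}_{-0};x_0)\,\pi(x_i)\,k(\mathbf{x}_{-i};x_i)}{\sum_{\ell=0}^M \pi(x_\ell)\,k(\mathbf{x}_{-\ell};x_\ell)}$$
is invariant under the coordinate swap $x_0 \leftrightarrow x_i$ (the denominator is fully symmetric, and the numerator is symmetric under that single transposition). Applying this swap as a change of variables in $\int h(x_i)\,w(x_i)\,dQ$ turns it into $\int h(x_0)\,w(x_i)\,dQ$, where on the right the labels have been permuted. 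Summing over $i=0,\ldots,M$ and using $\sum_i w(x_i)=1$ collapses the expression to $\int h(x_0)\,\pi(x_0)\,dx_0 = \mu_h$.

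\textbf{Version 1 (scaled acceptance weights).} Write $\tilde w_i = w(x_i)$ for $i\ge 1$ and use $w(x_0) = 1 - \sum_{i\ge 1} \tilde w_i$ to decompose
$$\sum_{i=0}^M h(x_i)\,w(x_i) \;=\; h(x_0) \;+\; \sum_{i=1}^M \bigl(h(x_i)-h(x_0)\bigr)\,\tilde w_i.$$
The first term integrates to $\mu_h$ under $Q$. For each $i\ge 1$, the product $\tilde w_i \cdot \pi(x_0)k(\mathbf{x}_{-0};x_0)$ equals $\tfrac{1}{M}\min\{\pi(x_0)k(\mathbf{x}_{-0};x_0),\ \pi(x_i)k(\mathbf{x}_{-i};x_i)\}$, which is again symmetric under $x_0\leftrightarrow x_i$, while $h(x_i)-h(x_0)$ is antisymmetric; by the same change-of-variables argument the integrand equals its own negative and the $i$-th expectation vanishes.

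\textbf{Main obstacle.} Mechanically the proof is essentially a symmetrization exercise; the only delicate point is bookkeeping the $x_0\leftrightarrow x_i$ swap in the multi-proposal density $k(\mathbf{x}_{-i};x_i)$, since $\mathbf{x}_{-0}$ contains $x_i$ and $\mathbf{x}_{-i}$ contains $x_0$, so one must check that the relabeling acts correctly on both the dependent proposal density and the functional form of the weights. Once that is nailed down, the two cases are short. (A mild caveat is that these arguments give exact unbiasedness at stationarity; if $x_0^{(1)}$ is not drawn from $\pi$, one obtains unbiasedness only in the asymptotic sense, exactly as in standard MH.)
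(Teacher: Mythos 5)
Your proof is correct, but it takes a genuinely different route from the paper's. The paper proves unbiasedness by embedding the weights into an auxiliary resampling algorithm (Algorithm \ref{algo:mcis_embedded_mcmc}, a generalization of Calderhead's method): it first argues, by citing the validity of Calderhead's algorithm for both weighting schemes, that the resampled points $y_i^{(j)}$ are marginally distributed as $\pi$, and then obtains the theorem in one line via the tower property, $\mu_h = \E\{h(y_i^{(j)})\} = \E[\E\{h(y_i^{(j)})\mid \bm x^{(j)}\}] = \E\{\sum_i w(x_i^{(j)})h(x_i^{(j)})\}$. You instead compute $\E_Q\{\sum_i w(x_i)h(x_i)\}$ directly under the stationary joint law $Q$ and kill each cross term by a coordinate-swap change of variables, exploiting the symmetry of $\pi(x_0)k(\bm x_{-0};x_0)w(x_i)$ (version 2) and of $\frac{1}{M}\min\{\pi(x_0)k(\bm x_{-0};x_0),\pi(x_i)k(\bm x_{-i};x_i)\}$ (version 1) under $x_0\leftrightarrow x_i$. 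In effect you are proving from scratch the very ingredient the paper imports from Calderhead, since unbiasedness of the weighted sum for all measurable $h$ is equivalent to the index-sampled point being $\pi$-distributed. The paper's route is economical because Algorithm \ref{algo:mcis_embedded_mcmc} is needed anyway for the Rao-Blackwellization result (Theorem \ref{theorem:rb}); your route is self-contained and has the virtue of surfacing the precise structural assumption both arguments silently rely on, namely that $k(\bm x_{-i};x_i)$ behaves correctly under relabeling of the proposal pool (exchangeability of the one-to-$M$ kernel in its proposal arguments), which you rightly flag as the only delicate step. Your closing caveat about exact unbiasedness requiring stationarity applies equally to the paper's proof, which also implicitly assumes $x_0^{(j)}\sim\pi$.
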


The proof is given in the appendix. It relies on Algorithm \ref{algo:mcis_embedded_mcmc}, which encodes the weights arising in Algorithm \ref{algo:mcis} with an empirical distribution. Algorithm \ref{algo:mcis_embedded_mcmc} is itself a generalization of Calderhead's algorithm, introducing the flexibility of $T$ in the propagation step. The empirical distribution introduces Monte Carlo error, which is the subject of Theorem \ref{theorem:rb}:

\begin{theorem} \label{theorem:rb}
Given the same weighting scheme, Algorithm \ref{algo:mcis} is a Rao-Blackwellization of Algorithm \ref{algo:mcis_embedded_mcmc}.
\end{theorem}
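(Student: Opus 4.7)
The plan is to identify an appropriate $\sigma$-algebra $\mathcal{F}$ such that Algorithm \ref{algo:mcis}'s estimator $\hat\mu_h = n^{-1}\sum_{j=1}^n \sum_{i=0}^M w(x_i^{(j)})h(x_i^{(j)})$ arises as the conditional expectation given $\mathcal{F}$ of Algorithm \ref{algo:mcis_embedded_mcmc}'s estimator, which I will write as $\tilde\mu_h = n^{-1}\sum_{j=1}^n h(z^{(j)})$, where at each iteration $j$ a single point $z^{(j)}$ is drawn from the weighted empirical distribution $\sum_{i=0}^M w(x_i^{(j)})\delta_{x_i^{(j)}}$ on the proposed points. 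Once this identity is in hand, the classical Rao-Blackwell theorem immediately yields both $\E[\hat\mu_h]=\E[\tilde\mu_h]$ and $\Var(\hat\mu_h)\le \Var(\tilde\mu_h)$, which is the formal content of the statement.

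To establish the identity, I would couple the two algorithms by feeding them common random numbers: the same draws from $K$ and from $T$ at each iteration, together with the same randomness used to draw $z^{(j)}$ in Algorithm \ref{algo:mcis_embedded_mcmc}. Because neither the next starting point $x_0^{(j+1)}$ (drawn from $T$) nor the proposals $\{x_1^{(j)},\dots,x_M^{(j)}\}$ (drawn from $K$) depend on $z^{(j)}$, the coupled runs produce the same trajectory $\{x_0^{(j)}\}$ and the same full collection of proposals. Let $\mathcal{F} = \sigma(\{x_i^{(j)}: 0\le i\le M,\, 1\le j\le n\})$. Then the weights $w(x_i^{(j)})$ are $\mathcal{F}$-measurable and the $z^{(j)}$'s are conditionally independent given $\mathcal{F}$ with $\mathrm{Pr}(z^{(j)} = x_i^{(j)}\mid\mathcal{F}) = w(x_i^{(j)})$. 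Taking conditional expectations term by term gives
\begin{equation*}
\E[\tilde\mu_h \mid \mathcal{F}] \;=\; \frac{1}{n}\sum_{j=1}^n \sum_{i=0}^M w(x_i^{(j)})\,h(x_i^{(j)}) \;=\; \hat\mu_h,
\end{equation*}
which is exactly the Rao-Blackwellization relation.

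The main subtlety I anticipate is disentangling the selection of $z^{(j)}$ from the propagation of the chain. In the Calderhead special case $T(dy;x_0,\dots,x_M) = \sum_i w(x_i)\delta_{x_i}(dy)$, one might be tempted to identify $z^{(j)}$ with the outcome of the $T$-step and hence with $x_0^{(j+1)}$; this would make $\mathcal{F}$ depend on the $z$'s and break the conditioning argument. The correct setup is to view the selection of $z^{(j)}$ in Algorithm \ref{algo:mcis_embedded_mcmc} as an auxiliary draw carried out independently of the $T$-step that produces $x_0^{(j+1)}$, consistent with the earlier remark that the weighting scheme used to form the estimator and the kernel $T$ used to propagate the chain may be chosen separately. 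Under that separation the coupling above is well defined, and the remainder of the argument reduces to a routine application of the tower property and linearity of expectation.
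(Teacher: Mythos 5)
Your proof is correct and rests on the same structural facts as the paper's, but it packages them differently. The paper never exhibits the conditional-expectation identity $\E[\tilde\mu_h\mid\mathcal{F}]=\hat\mu_h$ explicitly; instead it expands $\Var(\tilde\mu_h)$ into diagonal variance terms and off-diagonal covariance terms, applies the law of total variance to the former (where the inequality arises) and the law of total covariance to the latter (where the conditional covariance vanishes by the same conditional independence of the resampled points given the proposals that you invoke). Your route --- condition once on $\mathcal{F}=\sigma(\{x_i^{(j)}\})$, verify the tower identity, and cite the Rao--Blackwell/conditional-Jensen inequality --- is shorter and directly justifies the word ``Rao-Blackwellization''; you also correctly isolate the one genuine subtlety, namely that the resampling draw must be auxiliary randomness decoupled from the $T$-step so that $\Pr(z^{(j)}=x_i^{(j)}\mid\mathcal{F})=w(x_i^{(j)})$ even though $\mathcal{F}$ contains future proposals. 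What the paper's longer decomposition buys is an explicit accounting of \emph{where} the variance reduction occurs: only the within-iteration variance terms shrink, while the between-iteration covariances are unchanged, which is the basis for the remark that LWMCMC inherits the stickiness of the propagating chain. Two small points to tidy up: Algorithm \ref{algo:mcis_embedded_mcmc} resamples $N$ points per iteration rather than a single $z^{(j)}$, so your $\tilde\mu_h$ should read $\frac{1}{nN}\sum_{j=1}^n\sum_{i=1}^N h(y_i^{(j)})$ (the tower argument is unaffected since the conditional expectation of the average of the $N$ conditionally i.i.d.\ draws is the same weighted sum); and you should state explicitly that the $y_i^{(j)}$ across different $j$ are conditionally independent of $\{\bm x^{(k)}:k\neq j\}$ given $\bm x^{(j)}$, which is what licenses computing the conditional law of $y_i^{(j)}$ given the full $\mathcal{F}$ from the weights at iteration $j$ alone.
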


In the special case of $T$ being Calderhead's propagation rule, Algorithm \ref{algo:mcis} is a Rao-Blackwellization of Calderhead's algorithm. Again, the proof is given in the appendix.

\section{Effective Sample Size of LWMCMC}\label{section:ess}
To evaluate LWMCMC, we derive a measure of effective sample size. If $\hat{\mu}$ is the standard estimator of the mean based on the samples and their weights, and $\sigma^2$ is the variance of $\pi$, $ESS$ is defined $ESS = \sigma^2/ \Var(\hat{\mu}).$ Recall that the output of Algorithm \ref{algo:mcis} is $n(M+1)$ weighted samples, producing the mean estimate
$$\hat{\mu}  = \frac{1}{n} \sum_{j=1}^n \bar{x}^{(j)} , \quad \mbox{where} \quad \bar{x}^{(j)} = \sum_{i=0}^M w(x^{(j)}_i)x^{(j)}_i.$$

\begin{proposition} \label{theorem:ess}
The $ESS$ for samples and weights on the form produced by Algorithm \ref{algo:mcis} can be written as
$$ESS = \frac{n}{\frac{\Var(\bar{x})}{\sigma^2}\left(1+2\sum_{k}\gamma_k \right)},$$
where $\gamma_k$ is the lag-$k$ autocorrelation function of $\left \{\bar{x}^{(j)} \right\}^n_{j=1}$ and $\Var(\bar{x}^{(j)}) = \Var(\bar{x})$ for all $j$ by stationarity.
\end{proposition}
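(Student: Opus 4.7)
The plan is to compute $\Var(\hat{\mu})$ directly from the definition and then divide $\sigma^{2}$ by it. The estimator is $\hat{\mu}=\tfrac{1}{n}\sum_{j=1}^{n}\bar{x}^{(j)}$, so I would begin by writing
$$\Var(\hat{\mu})=\frac{1}{n^{2}}\sum_{j=1}^{n}\sum_{k=1}^{n}\Cov\bigl(\bar{x}^{(j)},\bar{x}^{(k)}\bigr).$$
Since the underlying chain $\{x_{0}^{(j)}\}$ is assumed stationary (at equilibrium) and the weights and auxiliary proposals at iteration $j$ are produced by the same mechanism from $x_{0}^{(j)}$, the process $\{\bar{x}^{(j)}\}$ inherits stationarity. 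Hence $\Var(\bar{x}^{(j)})=\Var(\bar{x})$ and $\Cov(\bar{x}^{(j)},\bar{x}^{(k)})=\Var(\bar{x})\,\gamma_{|j-k|}$.

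Next I would collect terms by lag: there are $n$ diagonal terms and $2(n-k)$ terms at lag $k$ for $1\le k\le n-1$, giving
$$\Var(\hat{\mu})=\frac{\Var(\bar{x})}{n}\left(1+2\sum_{k=1}^{n-1}\frac{n-k}{n}\gamma_{k}\right).$$
Taking the standard large-$n$ limit (which is the same approximation already used implicitly in the classical MCMC $ESS$ formula of Kass et al.\ and Liu), the factor $(n-k)/n$ tends to $1$ and the tail contribution is negligible under the usual summability assumption on $\{\gamma_{k}\}$, yielding
$$\Var(\hat{\mu})\;\approx\;\frac{\Var(\bar{x})}{n}\left(1+2\sum_{k}\gamma_{k}\right).$$

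Finally I would plug into the definition $ESS=\sigma^{2}/\Var(\hat{\mu})$ to obtain the stated identity. The only non-mechanical step is justifying that $\{\bar{x}^{(j)}\}$ is stationary with the claimed autocovariance structure; this follows because $\bar{x}^{(j)}$ is a deterministic function of $(x_{0}^{(j)},x_{1}^{(j)},\dots,x_{M}^{(j)})$, and the joint distribution of this vector depends only on $x_{0}^{(j)}$, whose marginal law is $\pi$ by construction of Algorithm \ref{algo:mcis}. The main obstacle, if any, is the asymptotic approximation that replaces $\sum_{k=1}^{n-1}(n-k)\gamma_{k}/n$ by $\sum_{k}\gamma_{k}$; this is the standard convention also used in the reference $ESS$ measure, and the result should be read as an asymptotic equality as $n\to\infty$.
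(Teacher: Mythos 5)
Your proposal is correct and follows essentially the same route as the paper: decompose $\Var(\hat{\mu})$ into variance and lagged covariance terms, invoke stationarity to write the covariances as $\Var(\bar{x})\,\gamma_{|j-k|}$, and replace $\sum_{k=1}^{n-1}(1-k/n)\gamma_k$ by $\sum_k \gamma_k$ in the large-$n$ limit (the paper cites Ces\`{a}ro summability for this step). Your added justification that $\{\bar{x}^{(j)}\}$ inherits stationarity from $\{x_0^{(j)}\}$ is a point the paper leaves implicit, but the argument is otherwise the same.
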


The proof is given in the appendix. In the case of the usual MH and its multi-proposal extensions, one always takes $w(x_0^{(j)}) = 1$ and $w(x_i^{(j)}) = 0$ for $i>1$. Then $\bar{x}^{(j)} = x_0^{(j)}$ and $x_0^{(j)} \sim \pi$ for all $j$ assuming the chain has converged, so that $\Var(\bar{x}^{(j)})= \Var(\bar{x}) = \sigma^2$ for all $j$. Moreover, $\gamma_k = \rho_k$ where $\rho_k$ is the lag-$k$ autocorrelation of the Markov chain $\{x_0^{(j)}\}_{j=1}^n$. So the above expression reduces to the standard measure of $ESS$ in the case of MCMC, namely $ESS = n/(1+2\sum_k \rho_k)$.

\vspace{12pt}
To estimate $ESS$ for LWMCMC, substitute $1+2\sum_{k}\gamma_k$ with an estimate of the spectral density of $\left \{\bar{x}^{(j)} \right\}^n_{j=1}$ at frequency 0 \citep[see e.g.][]{andrews1991, muller2014}, and $\sigma^2$ and $\Var(\bar{x})$ with their respective moment estimators.

\vspace{12pt}
Given exactly the same propagating chain, LWMCMC beats standard MCMC in terms of $ESS$ whenever $\Var(\bar{x})(1+2\sum_{k}\gamma_k ) < \sigma^2(1+2\sum_{k}\rho_k)$, which is an easy condition to check. A good kernel $K$ will typically make $\Var(\bar{x})$ small, while $T$ is important in making $1+2\sum_{k}\gamma_k$ small.

\section{Numerical example}\label{section:examples}
We apply Algorithm \ref{algo:mcis}  to the two-dimensional conditional density 
$$\pi(z,\theta | y) \propto \exp\left\{ -\frac{(y-\theta z)^2}{2\sigma^2} - \frac{(z-\theta)^2}{2}\right\},$$ 

which arises under a flat prior on $\theta$ in the indirect observation model
$$y=\theta z + \epsilon, \quad z | \theta \sim N(\theta,1), \quad \epsilon \sim N(0, \sigma^2), \quad \epsilon \independent (z,\theta).$$
\citep[See][for further details]{chen2001}. Our interest lies in estimating the mean of $\theta$. When $\sigma$ is small most of the density degenerates to lie around the curve $y=\theta z$, making sampling particularly hard. Figure 1 shows the contours of the density when $\sigma = 0.1$ and $y$ is observed to be 1. 

\subsection{Locally weighted Metropolis-Hastings}
To illustrate the algorithm in a simple case, we first implement the very blunt random walk MH and its exact LWMCMC counterpart given in section \ref{section:introduction}. $K$ is  taken to sample  $(z_1^{(j)},\theta_1^{(j)}) = x_1^{(j)} \sim N(x_0^{(j)},\lambda^2 I_2)$ and $T$ performs the usual MH rejection step. Figures 2 and 3 show the samples obtained using MH and LWMCMC MH respectively with $n=10,000$. The step size of the RW MH was tuned to maximize $ESS$, corresponding to $\lambda=0.45$.

\vspace{12pt}
In Figure 3, the black dots represent the weighted samples produced by $K$, even those rejected in the propagation step. The red dots represent the samples produced by $T$, i.e. the actual propagating samples. Many of the black dots have close to zero weight, and hence do not bias our estimate of the mean of $\theta$. $ESS$ was computed to be 195 for MH,  220 for LWMCMC with $\nu=1$ and 216 for $\nu \rightarrow \infty$ when using exactly the same propagating chain in the three cases, showing only small increased efficiency of LWMCMC in this setting. It is clear that neither of these algorithms are well suited to this problem.

\begin{figure}[h]
\centering
\begin{minipage}{0.33\textwidth}
\centering
\includegraphics[scale=0.3]{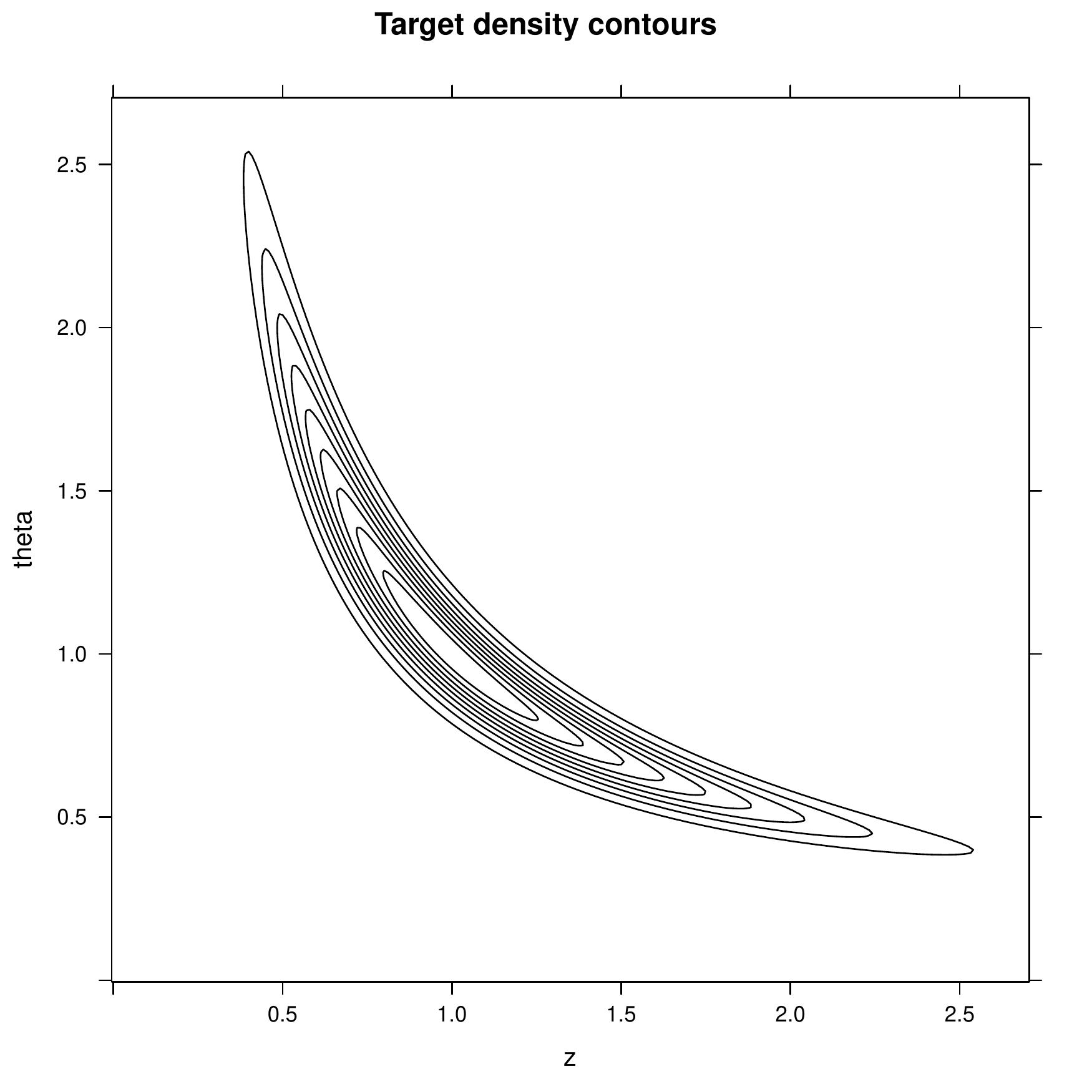}
\caption{Contours of $\pi$.}
\end{minipage}\hfill
\begin{minipage}{0.33\textwidth}\label{figure:mclis}
\centering
\includegraphics[scale=0.3]{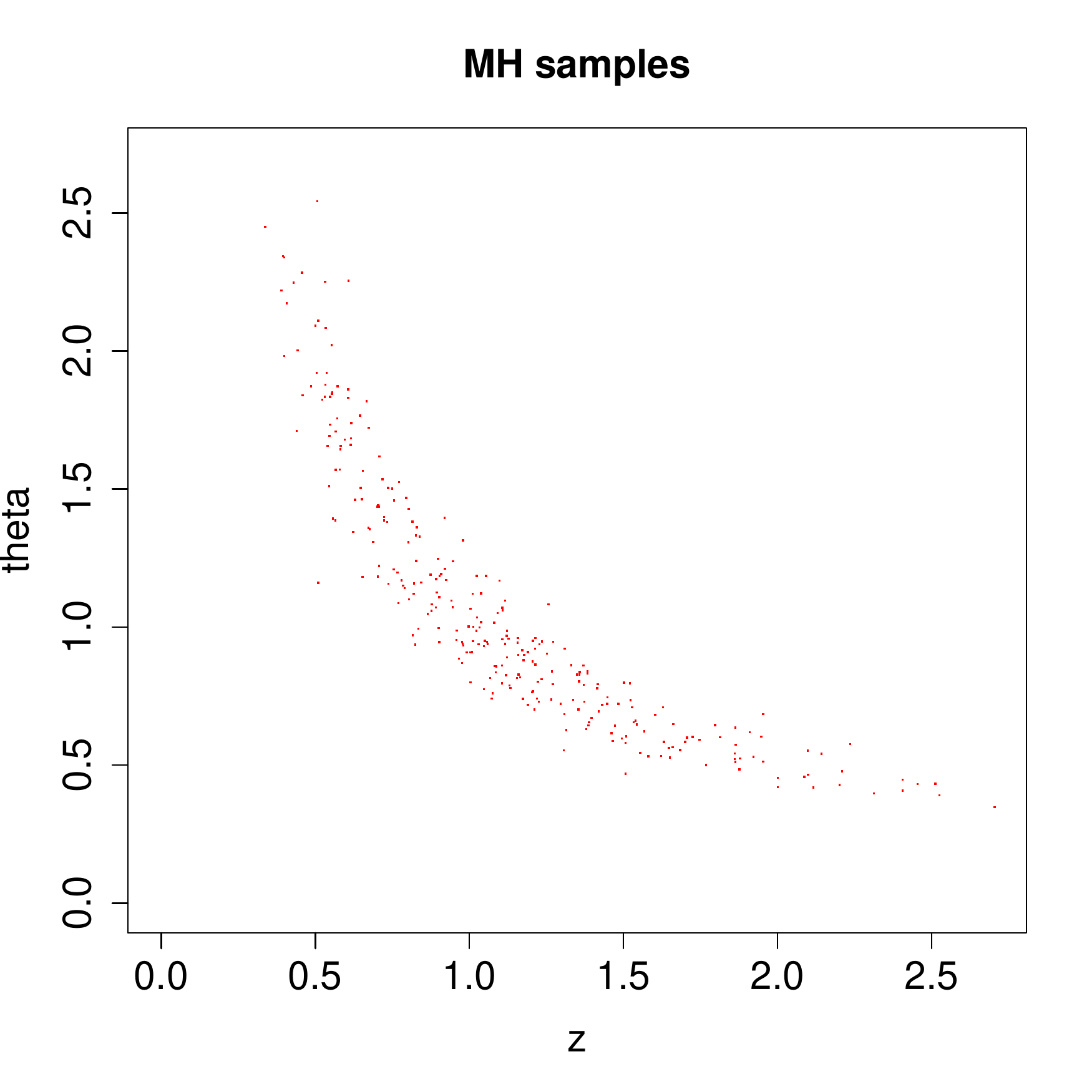}
\caption{RW MH.}
\end{minipage}
\begin{minipage}{0.33\textwidth}\label{figure:hmc}
\centering
\includegraphics[scale=0.3]{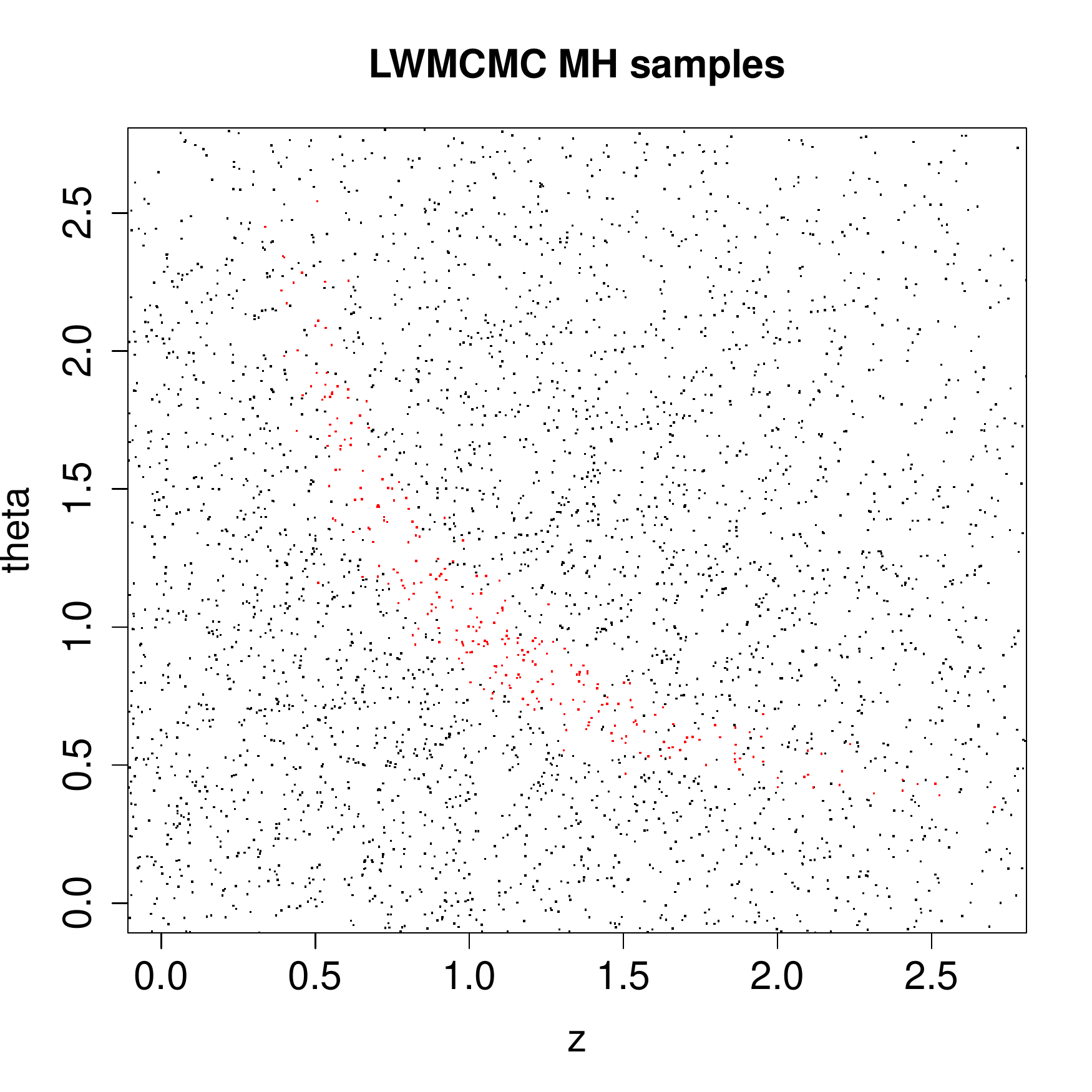}
\caption{LWMCMC RW MH.}
\end{minipage}
\end{figure}

\subsection{Locally weighted Hamiltonian Monte Carlo} \label{section:hmc}
The real benefit of LWMCMC arises when multiple points are proposed within each iteration. We illustrate one such algorithm here, where the leapfrog integration path that arises in Hamiltonian Monte Carlo (HMC) \citep{duane1987} is taken as the proposals in Algorithm \ref{algo:mcis}. In HMC the state $x = (z,\theta)^T$ is augmented with an auxiliary momentum variable $p$. The Hamiltonian is defined
$$H(x,p) = - \log \pi(x) + \frac{1}{2}p^TW^{-1}p,$$
where W is symmetric, positive-definite and is chosen by the user, and $p$ has the dimension of $S$. As the system evolves in time it obeys Hamilton's equations
$$ \frac{\partial x}{\partial t} = \frac{\partial H(x,p)}{\partial p} = W^{-1}p, \quad \frac{\partial p}{\partial t}= - \frac{\partial H(x,p)}{\partial x} = \{\nabla \log \pi(x)\}^T,$$
by conservation of energy. These equations need to be solved using numerical methods. The literature typically favors leapfrog integration, since it is reversible and outperforms Euler discretization. The leapfrog algorithm we use iterates
\begin{align*}
& x_{t + \delta/2} = x_t + \frac{\delta}{2} W^{-1}p_{t}, \\ 
& p_{t+\delta} = p_t - \delta\{\nabla \log \pi(x_t)\}^T, \\
& x_{t + \delta} = x_t + \frac{\delta}{2} W^{-1}p_{t+\delta}.
\end{align*}
Since $x_t$ is typically much cheaper to evaluate than $p_t$, this is of the same order of computational complexity as the standard leapfrog method.

\vspace{12pt}
\citet{duane1987} realized the following: If we can simulate from the augmented distribution $\pi^*(x,p) \propto \exp\{-H(x,p)\}$ then, marginally, $x \sim \pi$ and $\phi(p) \propto \exp(-0.5p^TW^{-1}p)$ so that $p$ is normally distributed. If $x_0^{(j)}$ is the state of the chain at iteration $j$, the HMC algorithm performs the steps
{\it
\begin{enumerate}
\item Draw a momentum vector $p_0^{(j)} \sim \phi$. \label{step:hmc_mom}
\item Starting from $(x^{(j)}_0,p^{(j)}_0)$, run the leapfrog integrator $M$ steps using a time increment of $\delta$ to obtain the proposal $(x_M^{(j)},p_M^{(j)})$.
\item Set $x_0^{(j+1)}= x_M^{(j)}$ with probability
$$r = \min[1, \exp\{-H(x_M^{(j)},p_M^{(j)}) + H(x_0^{(j)},p_0^{(j)})\}],$$
set $j=j+1$, and go to step \ref{step:hmc_mom} until $j = n$.
\end{enumerate}
}

HMC has been shown to be particularly useful for densities of the kind we are sampling from here, as Hamilton's equations prevent the proposals from escaping the energy well induced by the density. Note that when $M=1$, HMC reduces to the Metropolis-adjusted Langevin algorithm \citep{mala1996}. For an HMC algorithm performing $M>1$ leapfrog steps, \citet{neal1994} introduced the idea of sampling $l$ uniformly from the set $\{0,\dots,M\}$ and running the leapfrog integrator $l$ steps backwards and $M-l$ steps forwards, which was further developed in \citet{rgm2001}. This introduces symmetry among the points on the leapfrog path, which allows us to construct a $K$ that can be used in LWMCMC.

\vspace{12pt}
Specifically, sample $l$ as above and set $x^{(j)}_l = x$. Note here that the random index $i=l$ takes on the same meaning as the index $i=0$ did earlier. In parallel, run leapfrog integration backward in time for $l$ steps, generating  $\{x^{(j)}_0,\dots,x^{(j)}_{l-1}\}$, and forward for $M-l$ steps, generating $\{x^{(j)}_{l+1},\dots,x^{(j)}_M\}$. The set of associated momentum vectors is $\{p^{(j)}_0,\dots,p^{(j)}_M\}$. Let $K$ denote the measure that generates this proposal process. A challenge with this $K$ is that we are not able to fully exploit the trivial parallelization potential of the algorithm, since the integrator is sequential in nature.

\vspace{12pt}
By the symmetric sampling of $l$, the $\nu \rightarrow \infty$ weights of the proposals reduce to
$$w(x^{(j)}_i) = \frac{\exp\{-H(x^{(j)}_i,p^{(j)}_i)\}}{\sum_{i=0}^M \exp\{-H(x^{(j)}_i,p^{(j)}_i)\}}, \quad \mbox{for $i = 0,\dots, M$}.$$
For the same reason, the $\nu = 1$ weights are
$$w(x^{(j)}_i)  = \frac{1}{M}\min\left\{1, \frac{\exp\{-H(x^{(j)}_i,p^{(j)}_i)\}}{\exp\{-H(x^{(j)}_0,p^{(j)}_0)\}}\right\}, \quad \mbox{for $i \geq 1$, and} \quad w(x^{(j)}_0) = 1-\sum_{i=1}^M  w(x^{(j)}_i).$$

\vspace{12pt}
Moreover, let $a = 0$ if $l > M-l$ and $a=M$ otherwise. Take $T$ be the measure that performs the usual HMC Metropolis step comparing the initial point $x_l^{(j)}$ with $y= x_a^{(j)}$. That is, accept $x_a^{(j)}$ with probability
$$r = \min[1, \exp\{-H(x_a^{(j)},p_a^{(j)}) + H(x_l^{(j)},p^{(j)}_l)\}].$$

\vspace{12pt}
Table 1 summarizes the results of conventional HMC against LWMCMC HMC and how $ESS$ scales with $M$, with $n=1,000$, $\delta = 0.05$ and $W=I_2$. We also compare LWMCMC to Calderhead's algorithm for different values of $M$ and the resampling parameter $N$. See algorithm \ref{algo:mcis_embedded_mcmc} and remarks \ref{remark:comparison} and  \ref{remark:empirical} of the appendix for further details and comparisons between the algorithms. We apply the same measure of effective sample size, noting that Calderhead's algorithm induces a weighting scheme as mentioned in remark \ref{remark:empirical}. The improvement in $ESS$ of LWMCMC against Calderhead stems both from choosing a more useful $T$ and from the Rao-Blackwellization discussed in section \ref{section:properties}. The improvements in LWMCMC and Calderhead's method as $M$ increases are due to the decreasing variance of $\bar{x}$. Note also that we are observing super-efficiency in the HMC sampling scheme, where negative correlation among the samples lead to $ESS$ greater than the number of samples drawn.

\vspace{12pt}
Figures 4 and 5 display the HMC and LWMCMC HMC samples obtained when $M=60$. In Figure 5, the black dots represent all the positions visited in the leapfrog integration, and hence are the weighted samples produced by $K$. The red dots represent the samples produced by $T$, i.e. the propagating samples.
\begin{figure}[h]
\centering
\begin{minipage}{0.5\textwidth}
\centering
\includegraphics[scale=0.32]{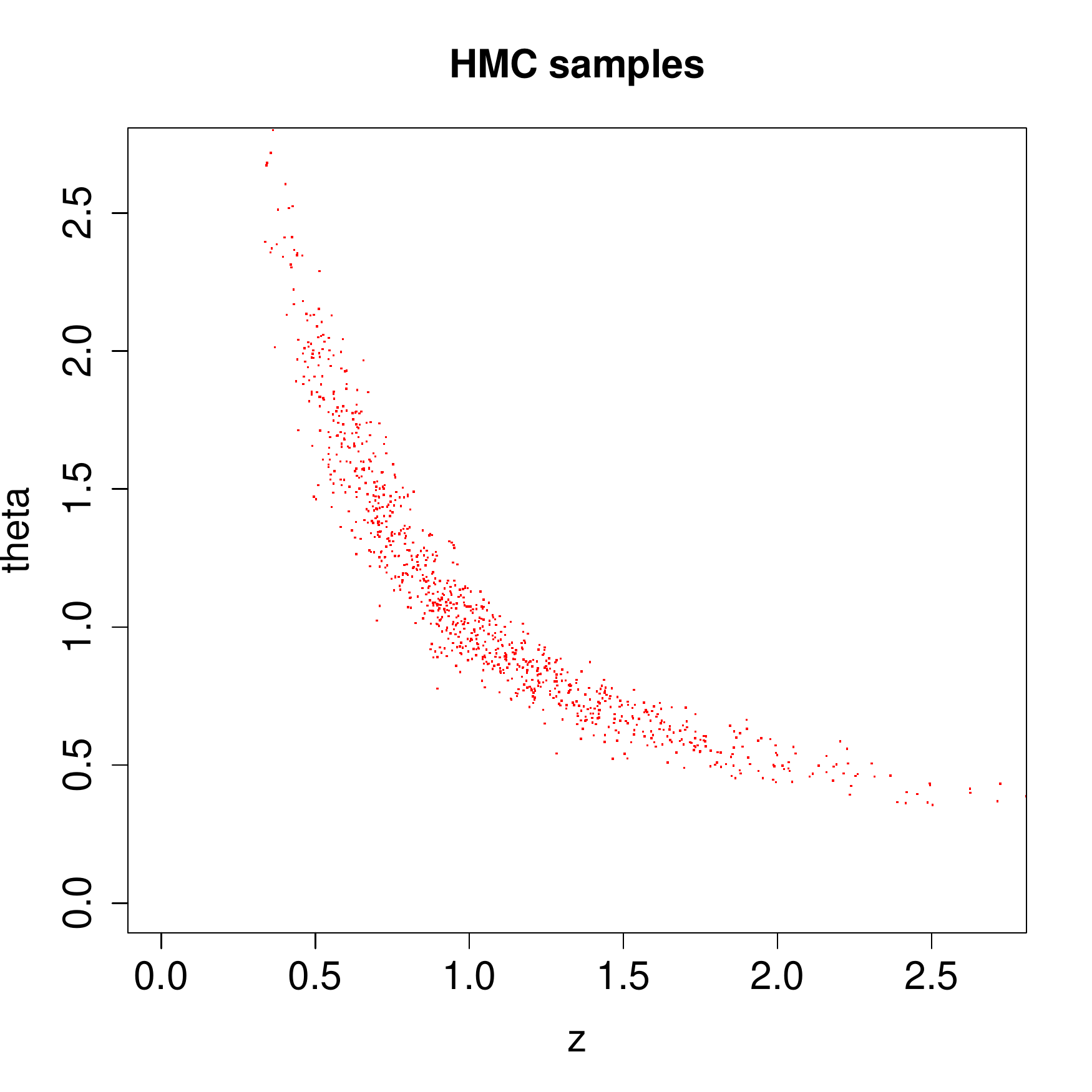}
\caption{HMC.}
\end{minipage}\hfill
\begin{minipage}{0.5\textwidth}\label{figure:mclis}
\centering
\includegraphics[scale=0.32]{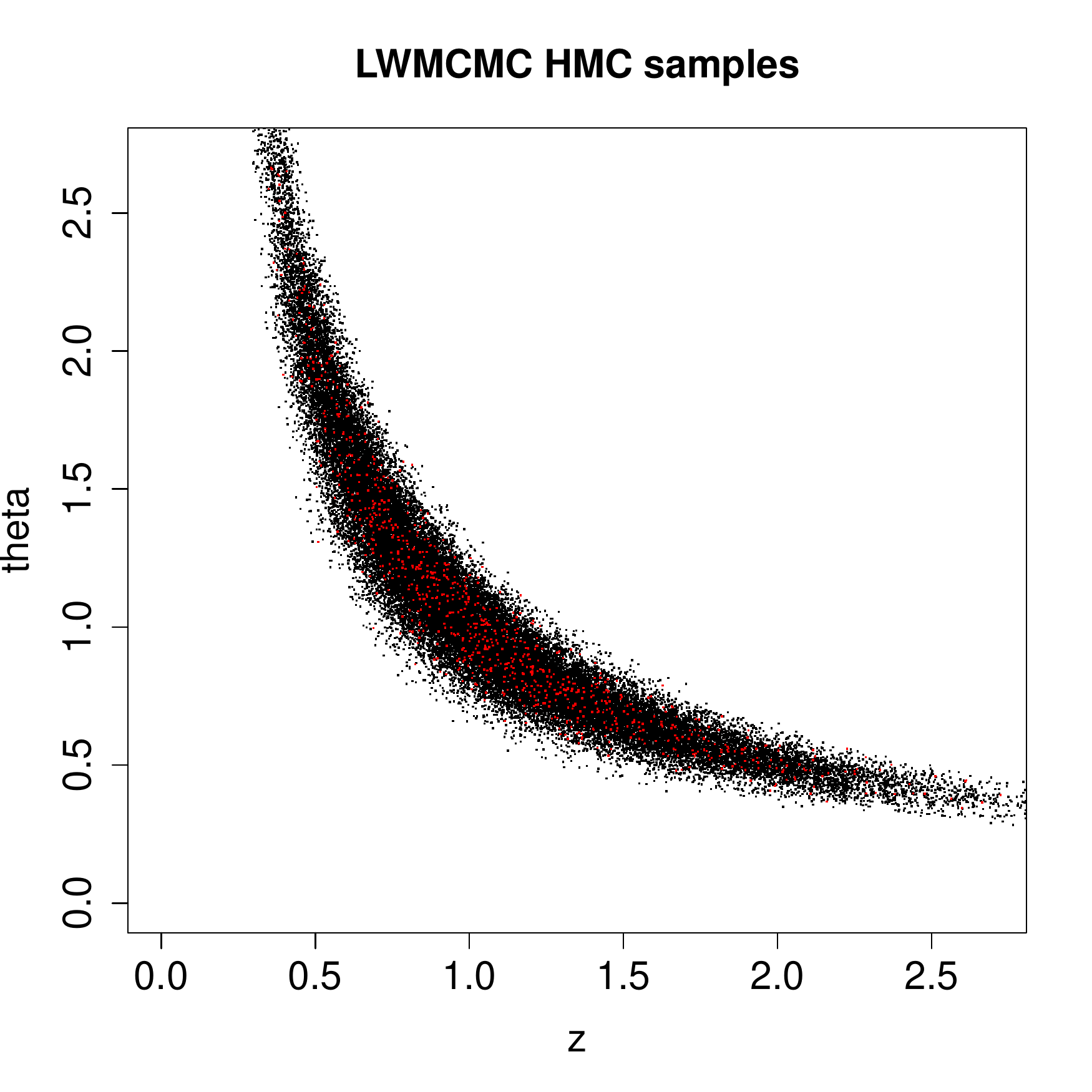}
\caption{LWMCMC HMC.}
\end{minipage}
\end{figure}

\begin{table}[h]
\begin{center}
\begin{tabular}{r | r | r  | r  || r | r | r | r | r}
\multicolumn{9}{c}{Effective Sample Size ($ESS$)} \\ \hline \hline
\multicolumn{1}{c|}{} & \multicolumn{2}{c|}{LWMCMC HMC}& \multicolumn{1}{c||}{HMC} & \multicolumn{5}{c}{Calderhead HMC, $\nu \rightarrow \infty$} \\ \hline 
  $M$ & $\nu = 1$  & $\nu \rightarrow \infty$ &  & $N=1$ & 10 & 50 & 200 & 1000 \\ \hline
     5 &          39 &         39 &   26  & 7.6 & 8.0 & 8.0 & 8.0 & 8.0 \\
     30 &      624 &      625  &  678  & 158 & 185 & 202 & 203  & 203\\
     60 &   2,468 &  2,483   &  1,137  & 440 & 1,333 & 1,544 & 1,525 & 1,512 \\
     90 &  4,773  &    4,792  & 938  & 705 & 2,638 & 3,103 & 3,275  & 3,282 \\
     240 & 9,613 &    9,681  & 1,061  & 916 & 3,312 & 5,207 & 6,235 & 6,333 \\
\end{tabular}
\caption{$ESS$ for HMC, LWMCMC HMC, and Calderhead at different values of $N$.}
\end{center}
\end{table}

\section{Conclusions} \label{section:conclusion}
In this paper we have proposed the locally weighted Markov chain Monte Carlo algorithm (LWMCMC), which dominates its parallel MCMC counterpart and typically improves upon standard MCMC. We show how to compute the effective sample size of the LWMCMC output and illustrate its performance on a toy example. The LWMCMC algorithm is well suited to modern computer architectures with massive numbers of cores, which can dramatically increase computational efficiency.

\vspace{12pt}

\begin{appendix}
\section{Proofs and supplementary material}
\subsection{Proof of Theorem \ref{theorem:proper}}
{\bf Theorem \ref{theorem:proper}.} \textit{The estimators produced by Algorithm \ref{algo:mcis} are unbiased for $\mu_h$.}
\vspace{12pt}

Before proving the theorem, we prove the following lemma.

\begin{lemma} \label{lemma:mcis_embedded_mcmc}
Samples obtained according to Algorithm \ref{algo:mcis_embedded_mcmc} given below are draws from $\pi$.
\end{lemma}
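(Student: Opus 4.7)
The plan is to verify, by induction on the iteration index $j$, that at stationarity each output sample produced by Algorithm~\ref{algo:mcis_embedded_mcmc} has marginal distribution $\pi$. Since Algorithm~\ref{algo:mcis_embedded_mcmc} generalizes Calderhead's scheme by replacing his particular resampling-based propagation with an arbitrary $\pi$-invariant kernel $T$, while retaining the step of drawing output samples from the weighted empirical $\sum_{i=0}^M w(x_i^{(j)})\, \delta_{x_i^{(j)}}$, the argument splits naturally into two pieces: (i) showing that the weighted empirical has expectation $\pi$ whenever $x_0^{(j)} \sim \pi$, and (ii) showing that $x_0^{(j+1)} \sim \pi$, which will close the induction.

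For piece~(i), I would introduce the symmetric augmented density
$$\widetilde{\pi}(x_0, x_1, \ldots, x_M) \;\propto\; \sum_{i=0}^M \pi(x_i)\, k(\bm x_{-i}; x_i)$$
on $S^{M+1}$. This density is invariant under permutations of its arguments, and each of its univariate marginals equals $\pi$. The version~2 weights can then be recognized as the conditional probability, under $\widetilde{\pi}$, that a uniformly drawn index selects coordinate $i$. Combining this identification with the observation that the joint law of $(x_0^{(j)}, x_1^{(j)}, \ldots, x_M^{(j)})$ given $x_0^{(j)} \sim \pi$ is proportional to $\pi(x_0)\, k(\bm x_{-0}; x_0)$ --- one of the $M+1$ symmetric pieces of $\widetilde{\pi}$ --- yields the properly-weighted-sample identity
$$\E\!\left[\sum_{i=0}^M w(x_i^{(j)})\, h(x_i^{(j)}) \;\Big|\; x_0^{(j)} \sim \pi\right] \;=\; \int_S h(x)\, \pi(x)\, dx$$
for every measurable $h$. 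An analogous but slightly longer computation handles version~1, leveraging the observation from Section~\ref{section:introduction} that version~1 corresponds to one step of a finite-state Markov chain whose stationary distribution is version~2, so that its weights still average to the same conditional probabilities. Piece~(ii) is essentially free: $T$ is assumed to leave $\pi$ invariant, so $x_0^{(j+1)}$ inherits the $\pi$ distribution from $x_0^{(j)}$, closing the induction.

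The main obstacle is bookkeeping rather than deep analysis: one must carefully distinguish the distribution of the chain state $x_0^{(j)}$ --- which carries the induction --- from the distribution of the auxiliary output samples drawn from the weighted empirical, and verify that each auxiliary draw, not merely their weighted mean, is individually $\pi$-distributed. A secondary subtlety is the base case; here I would either assume stationarity outright, or invoke standard irreducibility and aperiodicity arguments for the induced marginal chain on $\{x_0^{(j)}\}$, whose transition kernel is $\int K(d\bm x; x_0)\, T(dy; x_0, \bm x)$, in order to pass from an arbitrary starting point to the stationary regime.
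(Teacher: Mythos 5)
Your top-level decomposition coincides with the paper's: (ii) is exactly the paper's observation that $\{x_0^{(j)}\}$ is a standard $\pi$-invariant Markov chain by the assumption on $T$, and (i) is the claim that, given $x_0^{(j)}\sim\pi$, resampling from the weighted empirical measure produces $\pi$-distributed points. The paper disposes of (i) in one sentence by citing the validity of Calderhead's algorithm for the version 1 and 2 weights; you instead try to prove it directly. Your remarks about the base case and about needing each $y_i^{(j)}$, not just the weighted mean, to be $\pi$-distributed are well taken (the two are equivalent here, since the conditional law of each $y_i^{(j)}$ given $\bm x^{(j)}$ is exactly $\sum_{i}w(x_i^{(j)})\delta_{x_i^{(j)}}$, so the properly-weighted-sample identity for all bounded measurable $h$ is the same statement).

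The problem is that the two structural claims you make about $\widetilde{\pi}(x_0,\dots,x_M)\propto\sum_{i=0}^M\pi(x_i)k(\bm x_{-i};x_i)$ are false as stated. It is not invariant under permutations of its arguments unless $k(\,\cdot\,;x)$ is exchangeable in its $M$ proposal slots, which the paper does not assume; and its univariate coordinate marginals are not $\pi$: integrating the $i$th summand over $\bm x_{-0}$ gives $\pi(x_0)$ only for $i=0$, and a convolution-type quantity $\int\pi(y)\,k_\ell(x_0;y)\,dy$ for $i\neq 0$. The object that actually carries the argument is the index-augmented density $\rho(i,\bm x)\propto\pi(x_i)k(\bm x_{-i};x_i)$ on $\{0,\dots,M\}\times S^{M+1}$: under $\rho$ the index $I$ is uniform, the conditional law of $x_I$ given $I=i$ is exactly $\pi$, the version 2 weights are the conditional law $\rho(i\mid\bm x)$, and the initial configuration with $x_0^{(j)}\sim\pi$ is a draw from $\rho(\cdot\mid I=0)$. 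One then shows that replacing $I=0$ by a draw from $\rho(\cdot\mid\bm x)$ (version 2), or by one step of an index chain reversible with respect to $\rho(\cdot\mid\bm x)$ started at $I=0$ (version 1), gives $x_I\sim\pi$; even this step relies on a symmetry of the joint proposal density under swapping the roles of $x_0$ and $x_i$, which is precisely the structural assumption built into Calderhead's construction and multiple-try Metropolis. So your route is viable and more self-contained than the paper's citation, but the two intermediate claims need to be replaced by the index-augmented formulation, and the exchangeability-type condition on $K$ that it relies on should be made explicit.
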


\begin{algorithm} \label{algo:mcis_embedded_mcmc} Set $x_0^{(1)}$ to be the initial value and set $j=1$. Collect points $\{y_i^{(j)} : i=1,\dots, N; j=1,\dots n\}$ according to the steps:
\begin{enumerate}
\item Draw proposals $\{x^{(j)}_1,\dots,x^{(j)}_M\}$ from $K(dx_1,\dots,dx_M;x^{(j)}_0)$.
\item Sample $N$ points $\{y^{(j)}_1,\dots,y^{(j)}_{N}\}$ (with replacement) from $\{x^{(j)}_0,\dots,x^{(j)}_M\}$ with probabilities $w(x^{(j)}_i)$. \label{step:mcis_embedded_mcmc_resample}
\item Draw $y$ from $T(dy; x^{(j)}_0,x^{(j)}_1,\dots,x^{(j)}_M)$. \label{step:mcis_embedded_mcmc_propogation}
\item Set $x^{(j+1)}_0=y$ and $j=j+1$, and go to step 1 until $j=n$.
\item Estimate $\mu_h$ with $\hat{\mu}_h = \frac{1}{nN}\sum_{j=1}^n \sum_{i=1}^N h(y_i^{(j)}).$\end{enumerate}
\end{algorithm}

\begin{proof}[Proof of Lemma \ref{lemma:mcis_embedded_mcmc}]
$\{x^{(j)}_0 : j=1,\dots, n\}$ is a standard MCMC sampler by construction of the transition kernel $T$. Since Calderhead's algorithm is valid for versions 1 and 2 of the weighting scheme, we know step \ref{step:mcis_embedded_mcmc_resample} draws samples from $\pi$ given that $x^{(j)}_0$ follows $\pi$. Combining these two arguments we establish that the samples $\{y_i^{(j)} : i=1,2,\ldots, N; j=1,2,\ldots n\}$ all have marginal distribution $\pi$.
\end{proof}

\begin{proof}[Proof of Theorem \ref{theorem:proper}]
Let $y^{(j)}_i$ be a point drawn according to Algorithm \ref{algo:mcis_embedded_mcmc}. Let $\bm x^{(j)} = \{x^{(j)}_0,\dots,x^{(j)}_M\}$, so that
\vspace{-12pt}
\begin{align*}
\mu_h &= \E \left \{h(y^{(j)}_i)\right\} = \E\left [\E \left \{h(y^{(j)}_i) \Big| \bm x^{(j)} \right\}\right] = \E \left \{\sum^M_{i=0} w(x^{(j)}_i)h(x^{(j)}_i)\right\}
\end{align*}
\end{proof}

\subsection{Proof of Theorem \ref{theorem:rb} and remarks}
{\bf Theorem \ref{theorem:rb}.} \textit{Given the same weighting scheme, Algorithm \ref{algo:mcis} is a Rao-Blackwellization of Algorithm \ref{algo:mcis_embedded_mcmc}.}
\begin{proof}
Let $\bm x^{(j)} = \{x^{(j)}_0,\dots,x^{(j)}_M\}$ and  $\bm y^{(j)} = \{y^{(j)}_1,\dots,y^{(j)}_N\}$.

\begin{align*}
\Var \left\{ \frac{1}{nN} \sum_{j=1}^n \sum_{i=1}^N h(y^{(j)}_i) \right\} &= \frac{1}{n^2N^2} \sum_{j=1}^n \Var\left\{ \sum_{i=1}^N h(y_i^{(j)})\right\} + \frac{2}{n^2N^2}\sum_{j<k}^n \Cov \left\{ \sum_{i=1}^N h(y_i^{(j)}),\sum_{i=1}^N h(y_i^{(k)})\right\}.
\end{align*}

\vspace{12pt}
By the law of total variance, for the first of these terms we have
\begin{align*}
\frac{1}{n^2N^2}\sum_{j=1}^n \Var\left\{ \sum_{i=1}^N h(y_i^{(j)})\right\} &\geq \frac{1}{n^2N^2}\sum_{j=1}^n \Var\left[ \E \left\{ \sum_{i=0}^N h(y_i^{(j)}) | \bm x^{(j)} \right\} \right] \\
& = \frac{1}{n^2} \sum_{j=1}^n \Var \left\{\sum_{i=0}^M w(x_i^{(j)})h(x_i^{(j)}) \right\}.
\end{align*}
\vspace{12pt}

For the second, we will show that
\begin{align*}
\frac{2}{n^2N^2}\sum_{j<k}^n \Cov \left\{ \sum_{i=1}^N h(y_i^{(j)}),\sum_{i=1}^N h(y_i^{(k)})\right\}
&= \frac{2}{n^2}\sum_{j<k}^n \Cov \left\{ \sum_{i=0}^M w(x_i^{(j)})h(x_i^{(j)}),\sum_{i=0}^M w(x_i^{(k)})h(x_i^{(k)})\right\}.
\end{align*}

Assume $j<k$ without loss of generality. By the law of total covariance we then have
\begin{align*}
&\Cov \left\{\frac{1}{N}\sum_{i=1}^N h(y_i^{(j)}),\frac{1}{N}\sum_{i=1}^N h(y_i^{(k)})\right\}\\
=& \; \Cov \left[\frac{1}{N}\E \left\{\sum_{i=1}^N h(y_i^{(j)}) \Big| \bm x^{(j)}, \bm x^{(k)}\right\},\frac{1}{N}\E \left\{\sum_{i=1}^N h(y_i^{(k)}) \Big| \bm x^{(j)}, \bm x^{(k)}\right\} \right]\\
& \quad + \E \left[\Cov \left\{\frac{1}{N}\sum_{i=1}^N h(y_i^{(j)}),\frac{1}{N}\sum_{i=1}^N h(y_i^{(k)}) \Big| \bm x^{(j)}, \bm x^{(k)}\right\}\right]\\
= & \;\Cov \left\{\sum_{i=0}^M w(x_i^{(j)})h(x_i^{(j)}),\sum_{i=0}^M w(x_i^{(k)})h(x_i^{(k)})\right\},
\end{align*}

where the last equality follows from the conditional independence structure of $\bm y^{(j)}$ from all the other samples and proposals given $\bm x^{(j)}$. Summarizing these results, we can conclude that
\begin{align*}
\Var \left\{ \frac{1}{nN} \sum_{j=1}^n \sum_{i=1}^N h(y^{(j)}_i) \right\} &\geq \Var \left\{ \frac{1}{n} \sum_{j=1}^n \sum_{i=0}^M w(x_i^{(j)})h(x^{(j)}_i)\right\}.
\end{align*}
\end{proof}

\begin{remark} \label{remark:comparison}
Hence, with the same amount of computational time, but less memory and no resampling step, we are able to do better than Algorithm \ref{algo:mcis_embedded_mcmc}. The proof of Theorem \ref{theorem:rb} illuminates where the reduction in variance occurs. This indicates that the method is still sensitive to the properties of the Markov chain used to propagate the sample space. In particular, our method will have the same degree of stickiness as Algorithm \ref{algo:mcis_embedded_mcmc}.
\end{remark}
\vspace{6pt}

\begin{remark} \label{remark:empirical}
Algorithm \ref{algo:mcis_embedded_mcmc} also induces a weighting scheme. Specifically, if $N_i^{(j)}$ is the number of times the proposal $x_i^{(j)}$ is resampled out of the $N$ resampled points, the weights are $N_i^{(j)}/N$. Note that $(N_0^{(j)}, \dots, N_M^{(j)}) \sim \mbox{Multinomial}[N, \{w(x_0^{(j)}, \dots, w(x_M^{(j)})\}]$. Thus, both Calderhead's algorithm and \ref{algo:mcis_embedded_mcmc} attempt to encode the information about this multinomial distribution using an empirical distribution. This creates a loss of information, as proved in Theorem $\ref{theorem:rb}$. The Dvoretszky-Kiefer-Wolfowitz inequality \citep{dkw1956} provides probability bounds for how close the empirical CDF of the resampled points is to the actual CDF as a function of $N$. This can give some indication of how large $N$ would have to be chosen for Algorithm \ref{algo:mcis_embedded_mcmc} and Calderhead's algorithm to approximate LWMCMC well. As $N \rightarrow \infty$, Algorithm \ref{algo:mcis_embedded_mcmc} converges to LWMCMC.
\end{remark}

\subsection{Derivation of $ESS$ for LWMCMC}

{\bf Proposition \ref{theorem:ess}} \textit{The $ESS$ for samples and weights on the form produced by Algorithm \ref{algo:mcis} can be written as
$$ESS = \frac{n}{\frac{\Var(\bar{x})}{\sigma^2}\left(1+2\sum_{k}\gamma_k \right)},$$
where $\gamma_k$ is the lag-$k$ autocorrelation function of $\left \{\bar{x}^{(j)} \right\}^n_{j=1}$ and $\Var(\bar{x}) = \Var(\bar{x}^{(j)})$ for all $j$ by stationarity.}

\begin{proof}
\begin{align*}
\frac{\sigma^2}{ESS} & = \Var\left(\frac{1}{n}\sum_{j=1}^n \bar{x}^{(j)}\right) = \frac{1}{n^2}\sum_{j=1}^n \Var(\bar{x}^{(j)}) +\frac{2}{n^2}\sum_{j<k}^n\Cov(\bar{x}^{(j)},\bar{x}^{(k)})\\
& = \frac{1}{n} \Var(\bar{x}) +\frac{2}{n}\sum_{k=1}^{n-1}\left(1-\frac{k}{n}\right)\gamma_k \Var(\bar{x}) \\
& = \frac{\Var(\bar{x})}{n}\left\{ 1 + 2\sum_{k=1}^{n-1}\left(1-\frac{k}{n}\right)\gamma_k \right\},
\end{align*}
where the second inequality follows from stationarity. Recall that by the Ces\`{a}ro summability theorem
$$ \lim_{n\rightarrow \infty} \sum_{k=1}^{n-1}\left(1-\frac{k}{n}\right)\gamma_k = \sum_k\gamma_k .$$

For sufficiently large $n$, we therefore substitute the right hand side of this equality into the expressions derived above. Rearranging the terms will give the desired result.
\end{proof}
\end{appendix}

\bibliography{MCIS_ref}
\bibliographystyle{apalike}

\end{document}